\newtheorem{corollary}{Corollary}
\newtheorem{theorem}{Theorem}
\begin{document}

\title{Non-Asymptotic Converse Bounds Via Auxiliary Channels} 

\author{%
  \IEEEauthorblockN{Ioannis Papoutsidakis, Robert J. Piechocki, and Angela Doufexi}\\
  \IEEEauthorblockA{
  					Department of Electrical and Electronic Engineering,
                    University of Bristol\\  
                    Email: \{ioannis.papoutsidakis, r.j.piechocki, a.doufexi\}@bristol.ac.uk}
}

\maketitle

\begin{abstract}
This paper presents a new derivation method of converse bounds on the non-asymptotic achievable rate of discrete weakly symmetric memoryless channels. It is based on the finite blocklength statistics of the channel, where with the use of an auxiliary channel the converse bound is produced. This method is general and initially is presented for an arbitrary weakly symmetric channel. Afterwards, the main result is specialized for the $q$-ary erasure channel (QEC), binary symmetric channel (BSC), and QEC with stop feedback. Numerical evaluations show identical or comparable bounds to the state-of-the-art in the cases of QEC and BSC, and a tighter bound for the QEC with stop feedback.
\end{abstract}
\begin{IEEEkeywords}
Converse bounds, achievability bounds, finite blocklength regime, non-asymptotic analysis, channel capacity, discrete memoryless channels.
\end{IEEEkeywords}
\section{Introduction}
An important step towards latency mitigation is the study of achievable channel coding rates for finite blocklengths. Shannon proved in \cite{shannon1948mathematical} that there exists a code that can achieve channel capacity as the blocklength grows to infinity. This is the main reason the majority of conventional communications systems utilize blocks of several thousands of symbols to transmit with a rate approximately close to capacity. In \cite{poly}, several  lower bounds to capacity are established and a clearer image of the achievable coding rates in the finite blocklength regime is provided. Observing the available converse and achievability bounds it is apparent that the channel capacity significantly overestimates the achievable rates in the case of short blocklengths. 

The significance of finite blocklength results lies in the following: fewer channel uses prior to decoding allow not only for shorter transmission times but most importantly for less complex encoders and decoders as well as more flexible upper layer protocols. The trade-off between achievable rate and blocklength has been the focus of many researchers the past decade. Whether the achievability bounds of \cite{poly} are optimal remains an open question. This work aims to provide insight on the optimality of the state-of-the-art achievability and converse bounds, along with proposing a different approach of obtaining converse bounds over finite blocklengths. 

More specifically, the current paper formulates a novel method for the derivation of converse bounds on the non-asymptotic achievable rate of discrete weakly symmetric memoryless channels. This is accomplished by utilizing the finite statistics of the channel along with an auxiliary channel. The main result considers the average probability of error over arbitrary weakly symmetric memoryless channels and it is presented in Section \ref{genmeth}. Section \ref{prework} provides a short review on the state-of-the-art. In Sections \ref{qec}, \ref{bsc}, and \ref{vl}, the general method is specialized for the $q$-ary erasure channel (QEC), binary symmetric channel (BSC), and QEC with stop feedback, respectively. 

\textit{Notation:} Throughout this paper a $(n,M,\epsilon)$ code is a code with blocklength $n$, codebook size $M$, and average probability of error $\epsilon$. Furthermore, the letter $q$ denotes the cardinality of the channel input. Notation $X_1^n$ denotes the vector $X_1,...,X_n$. Finally, a $(l_a,M,n,\epsilon)$ VLSF code is a variable length stop feedback code with average blocklength $l_a$, codebook size $M$, packet size $n$, and average probability of error $\epsilon$.

\section{Previous Work}
\label{prework}

The best achievability bounds to date for discrete memoryless channels are established in \cite{poly}. Random Coding Union (RCU) bound is based on the analysis of error probability of random codes under maximum likelihood decoding \cite[Theorem 16]{poly}.
For an arbitrary $P_X$ there exists an $(M,\epsilon)$ code such that,
\begin{align}
\epsilon \leq \mathbb{E}[\min\{1, (M-1)\mathbb{P}[i(\bar{X};Y)\geq(X;Y)|X,Y] \}],
\end{align}
where $P_{XY\bar{X}}(a,b,c)=P_X(a)P_{Y|X}(b|a)P_X(c)$.

A less complex, in terms of computation, bound is the Dependence Testing (DT) bound \cite[Theorem 17]{poly}. For any distribution $P_X$ on $A$, there exists a code with $M$ codewords and average probability of error not exceeding
\begin{align}
\epsilon \leq \mathbb{E} \Bigg [\text{exp} \Bigg  \{-\bigg[i(X;Y)-\log  \frac{M-1}{2} \bigg]^+\Bigg \} \Bigg].
\end{align}
Numerical evaluations confirm that RCU bound is tighter for BSC and DT bound for BEC \cite{poly}.
%

Regarding the converse bounds, there are several older results from the early days of information theory. For instance, a bound based on Fano's inequality can be found in \cite{WOLFOWITZ19681}. The established sphere-packing bound as well as an improvement is given in \cite{SHANNON1967522} and \cite{4494709}, respectively. A special converse bound for the BEC with erasure probability $\delta$ is given in \cite[Theorem 38]{poly}, 
\begin{align}
\epsilon \geq \sum_{l=\lfloor{n-\log_2M}\rfloor+1}^n \binom{n}{l} \delta^l (1-\delta)^{n-l}\bigg(1-\frac{2^{n-l}}{M}\bigg),
\label{cnv3}
\end{align}
for a $(n,M,\epsilon)$ code. This bound corresponds to the meta-converse bound \cite{YP12}. Similarly, \cite[Theorem 35]{poly} provides a bound for the BSC that coincides with the meta-converse bound as well as the classical sphere-packing bound.
\section{Methodology}
\label{genmeth}
The stochasticity of a discrete memoryless channel is modelled by the conditional probability mass function of the channel output given the channel input $P_{Y|X}$ or vice versa $P_{X|Y}$. The nature of this process in the finite regime results to relative frequencies that do not always follow the probability mass function. For instance, a finite Bernoulli process with $P(X_i=1)=p$ has a positive probability of producing sequences where the relative frequency of $1$s is higher than $p$. Motivated by this observation a converse bound for weakly symmetric memoryless channels is developed.

The focus of this work is on weakly symmetric channels because the output of such channel can be defined  as the function $Y_1^n = f(X_1^n,Z_1^n)$, where $X_1^n$ is the channel input and $Z_1^n$ is a noise vector independent from $X_1^n$. Thus, by analysing the non-asymptotic behaviour of $Z_1^n$ an auxiliary channel can be produced in order to derive a converse to the non-asymptotic achievable rate of the main channel.

The main channel is depicted in Figure \ref{fstate} as a channel with random states where the conditional probability of the output given the input depends on the state. The state is not available at neither the encoder nor the decoder, therefore it is treated as part of the noise. The auxiliary channel is presented in Figure \ref{fstate2}. The noise is produced similarly to the main channel but in this case, there is side information at the receiver and the transmitter. That is to say, the state of the channel is known before transmission. 

Clearly, the auxiliary channel has greater achievable rate than the main one since there is no uncertainty about the state. Hence, a converse bound for the auxiliary channel bounds the achievable rate of the main channel as well. 

The question arises of how an auxiliary channel is produced based on the given main channel $P_{Y_1^n|X_1^n}$. A simple yet effective way to classify the channel states is based on the number of erroneous transmissions over one blocklength. In this manner, the states follow the binomial distribution,
\begin{align}
\begin{split}
P(S=s) = \binom{n}{s} &(1-P(Y=i|X=i))^s\\ &\cdot P(Y=i|X=i)^{n-s},
\end{split}
\label{bnm}
\end{align}
for any $i$ that belongs to the input alphabet.
 
At this stage, the capacity $C_s$ of each state of the auxiliary channel is calculated,
\begin{align}
C_s &= \max_{p(x_1,...,x_n)}{I(X_1^n;Y_1^n|S=s)}.
\end{align}
By setting a fixed aggregate rate $R_{a}=nR$ over all states, one can derive two different classes of states. One class includes the states with a capacity lower or equal to the rate $R_{a}$ and the other class the remaining. The latter class of states denotes cases where \textit{the error probability is certainly non-zero} since it contradicts the noisy channel coding theorem where $R < C$ for arbitrarily low probability of error. The probability of erroneous transmission due to rates that are not supported is a lower bound on the average probability of error given the rate $R_{a}$ for the auxiliary channel, and therefore for the main one as well. 

Formally, this is formulated as a corollary of the noisy channel coding theorem\cite{shannon1948mathematical}.
\begin{corollary}
For a weakly symmetric memoryless discrete channel, the average error probability of a $(n,M,\epsilon)$ code satisfies,
\begin{align*}
\epsilon \geq \sum_{s \in \mathcal{S}} P(\mathrm{error}|S=s)P(S = s),
\end{align*}
where,
\begin{align*}
\mathcal{S} =  \big \{s: C_s < \log_qM=nR \big \}.
\end{align*}
\label{corol}
\end{corollary}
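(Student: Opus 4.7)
The plan is a two-step reduction: first dominate the main-channel error probability by that of the auxiliary channel, and then apply the law of total probability on the auxiliary state and drop the terms outside $\mathcal{S}$.

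For the first step, let $\epsilon^{\star}_{\mathrm{aux}}$ denote the minimum average error probability over length-$n$, size-$M$ codes on the auxiliary channel, in which the state $S$ is revealed to both encoder and decoder. Any $(n,M,\epsilon)$ code for the main channel is also a valid code on the auxiliary channel, simply by ignoring $S$: since the main channel is weakly symmetric, I would write $Y_1^n=f(X_1^n,Z_1^n)$ with $Z_1^n$ independent of $X_1^n$, so marginalizing $S$ (a deterministic function of $Z_1^n$) according to (\ref{bnm}) recovers the product law $P_{Y_1^n|X_1^n}$ of the main channel. Thus $\epsilon^{\star}_{\mathrm{aux}}\le \epsilon$.

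For the second step, I would condition on the state $S$ and decompose,
\begin{align*}
\epsilon^{\star}_{\mathrm{aux}} \;=\; \sum_{s} P(\mathrm{error}\mid S=s)\,P(S=s),
\end{align*}
where $P(\mathrm{error}\mid S=s)$ denotes the minimum conditional error probability on the $s$-th sub-channel. Every summand is non-negative, and for $s\in\mathcal{S}$ the converse to the noisy channel coding theorem forces $P(\mathrm{error}\mid S=s)>0$, since the requested rate $R=\tfrac{1}{n}\log_q M$ strictly exceeds the sub-channel capacity $C_s$. Discarding the non-negative terms indexed by $s\notin\mathcal{S}$ gives
\begin{align*}
\epsilon \;\geq\; \epsilon^{\star}_{\mathrm{aux}} \;\geq\; \sum_{s\in\mathcal{S}} P(\mathrm{error}\mid S=s)\,P(S=s),
\end{align*}
which is exactly the statement of the corollary.

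The main obstacle is making the first step airtight: one has to check that the binomial state law (\ref{bnm}), together with the state-conditional transition probabilities of the auxiliary model, truly recovers the i.i.d.\ product form of the main channel once $S$ is marginalized out. This is where weak symmetry is doing the real work, for without the decomposition $Y_1^n=f(X_1^n,Z_1^n)$ with input-independent noise, $S$ would not be well-defined independently of the codebook and the auxiliary channel would fail to be a valid stochastic relaxation of the main one. Once that consistency check is in hand, the remainder of the argument is routine bookkeeping on the law of total probability and the non-negativity of probabilities.
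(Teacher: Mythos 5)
Your proposal is correct and follows essentially the same route as the paper: the paper gives no formal proof environment for the corollary, but its Methodology section justifies it exactly as you do, by noting that the state-informed auxiliary channel dominates the main channel, applying the law of total probability over the states $S$, and discarding the non-negative terms with $C_s \geq nR$ while invoking the converse to the noisy channel coding theorem for the retained ones. Your added consistency check that marginalizing $S$ out of the auxiliary model recovers $P_{Y_1^n|X_1^n}$ is a useful tightening of the paper's informal claim, but it is not a different argument.
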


To compute a lower bound on $P(\mathrm{error}|S=s)$ we can use two different approaches. The first and more general one is the strong converse to the noisy channel coding theorem by Wolfowitz \cite{578869}.

\begin{theorem}
For an arbitrary discrete memoryless channel of capacity C nats and any $(n,e^{nR},\epsilon)$ code with $R>C$,

\begin{align}
\epsilon \geq 1-\frac{4A}{n(R-C)^2}-e^{-\frac{n(R-C)}{2}},
\end{align}
where $A$ is a finite positive constant independent of $n$ or $R$.
\label{wolf}
\end{theorem}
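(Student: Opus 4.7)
The plan is to prove the Wolfowitz strong converse via an information-spectrum argument. First, I would establish the key inequality
\[
1 - \epsilon \leq \Pr\!\left[i(X^n;Y^n) \geq \log M - \gamma\right] + e^{-\gamma},
\]
valid for any $(n,M,\epsilon)$ code and every $\gamma>0$, where the joint law is induced by drawing a codeword uniformly from the codebook and passing it through the channel. The information density is defined with respect to an auxiliary \emph{product} output distribution $Q_{Y^n}=\prod_{i=1}^{n} Q_Y^{\star}$, where $Q_Y^{\star}$ is the single-letter capacity-achieving output distribution. This inequality is a standard hypothesis-testing / change-of-measure result: the probability that the decoder succeeds is bounded by the probability that a likelihood ratio is large, plus an $e^{-\gamma}$ slack accounting for the shift to the auxiliary measure.

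Second, I would exploit the product structure of $Q_{Y^n}$. Conditional on any deterministic input $x^n$, the information density $i(x^n;Y^n)=\sum_{i=1}^{n}\log\frac{P_{Y|X}(Y_i|x_i)}{Q_Y^{\star}(Y_i)}$ is a sum of $n$ independent random variables. By the saddle-point characterization of capacity, $\mathbb{E}\bigl[\log\tfrac{P_{Y|X}(Y_i|x_i)}{Q_Y^{\star}(Y_i)}\bigr]=D(P_{Y|X=x_i}\|Q_Y^{\star})\leq C$ for every input letter, so $\mathbb{E}[i(X^n;Y^n)]\leq nC$ regardless of how the codewords are distributed. Finiteness of the alphabets ensures that the per-letter variance of the log-likelihood ratio is bounded by a constant $A$ depending only on the channel; this is the constant that appears in the theorem.

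Third, I would set $\gamma=n(R-C)/2$, so that the threshold in step one becomes $\log M-\gamma=n(R+C)/2$, which exceeds $\mathbb{E}[i(X^n;Y^n)]/n$ by at least $(R-C)/2$. Chebyshev's inequality, using the independence noted above, then yields
\[
\Pr\!\left[i(X^n;Y^n)\geq n\tfrac{R+C}{2}\right] \leq \frac{nA}{\bigl(n\tfrac{R-C}{2}\bigr)^{2}} = \frac{4A}{n(R-C)^{2}}.
\]
Substituting this back together with $e^{-\gamma}=e^{-n(R-C)/2}$ gives $1-\epsilon\leq \frac{4A}{n(R-C)^{2}}+e^{-n(R-C)/2}$, which rearranges to the claimed inequality.

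The most delicate step is the first one: the bound must hold for an \emph{arbitrary} codebook, not just for i.i.d.\ inputs. The key trick is the deliberate choice of $Q_{Y^n}$ as a product law, which preserves independence of the summands in $i(X^n;Y^n)$ conditional on any fixed codeword. This is precisely what allows Chebyshev's inequality to deliver the $O(1/n)$ tail behaviour with a constant $A$ that is uniform in both $R$ and $n$, as required by the theorem statement.
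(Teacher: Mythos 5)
The paper never proves this theorem: it is imported as a black box from Wolfowitz \cite{578869}, so there is no internal proof to compare against. Your information-spectrum derivation is nonetheless correct and actually recovers the exact constants in the statement. The change-of-measure inequality $1-\epsilon\leq\Pr[i(X^n;Y^n)\geq\log M-\gamma]+e^{-\gamma}$ does hold for an arbitrary auxiliary output law $Q_{Y^n}$ and an arbitrary codebook (split each decoding region according to whether the likelihood ratio exceeds $Me^{-\gamma}$ and use disjointness of the regions to absorb the residual into $e^{-\gamma}$); the KKT saddle-point property gives $D(P_{Y|X=x}\,\|\,Q_Y^{\star})\leq C$ for \emph{every} input letter, not just those in the support of the optimal input, so the conditional mean of the information density is at most $nC$ for any codeword; and finiteness of the alphabets gives a uniform per-letter variance bound $A$, so Chebyshev with $\gamma=n(R-C)/2$ yields exactly $4A/(n(R-C)^2)+e^{-n(R-C)/2}$. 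One cosmetic slip: the threshold $n(R+C)/2$ exceeds $\mathbb{E}[i(X^n;Y^n)]$ by at least $n(R-C)/2$; it does not exceed $\mathbb{E}[i(X^n;Y^n)]/n$ by $(R-C)/2$ as written, though your displayed Chebyshev computation is the correct one. Your route is the standard modern replacement for Wolfowitz's original combinatorial argument via typical (``generated'') sequences; what it buys is a clean identification of $A$ as the maximal single-letter conditional variance of the log-likelihood ratio against the capacity-achieving output distribution, and a transparent explanation---which you correctly flag as the crux---of why the product choice of $Q_{Y^n}$ preserves independence of the summands for an arbitrary, non-i.i.d.\ codebook.
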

As noted by the authors of \cite{poly}, Theorem \ref{wolf} is not useful for finite blocklength analysis, and indeed when used directly to the main channel it produces a converse bound which is looser than capacity. However, when it is combined with the method of auxiliary channels it can provide bounds that are comparable to the state-of-the-art as is demonstrated in the following sections.

The second approach is to take into consideration the specific structure of each channel state. Let $X_1^n=x_1^n$ be the decision of the optimal decoder given $Y_1^n=y_1^n$. Then,

\begin{align}
\begin{split}
P(\mathrm{error}|S=s) &= 1-P(X_1^n=x_1^n|Y_1^n=y_1^n) \\
&= 1-\frac{P(Y_1^n=y_1^n|X_1^n=x_1^n)P(X_1^n=x_1^n)}{P(Y_1^n=y_1^n)}\\
&=1-\frac{P(Z_1^n=z_1^n)P(X_1^n=x_1^n)}{P(Y_1^n=y_1^n)}\\
&=1-\frac{P(Z_1^n=z_1^n)P(X_1^n=x_1^n)}{\prod_{k=1}^n P(Y_k=y_k|Y_1^{k-1}=y_1^{k-1})}\\
&\geq 1-\frac{P(Z_1^n=z_1^n)P(X_1^n=x_1^n)}{\prod_{k=1}^n P(Y_k=y_k)}.
\end{split}
\label{secap}
\end{align}

Remarkably, the technique of using an auxiliary channel to derive results is not new in the field. Several network information theory problems are tackled in this manner. For instance, the characterisation of the sum-capacity of the Gaussian interference channel with weak interference, where a genie-aided channel is used to produce a converse \cite{gamal}. There are also some recent results where similar ideas are utilized for finite blocklength analysis \cite{9279290}. In the following sections, we produce tight and near-tight converse bounds for several channels using this method. 

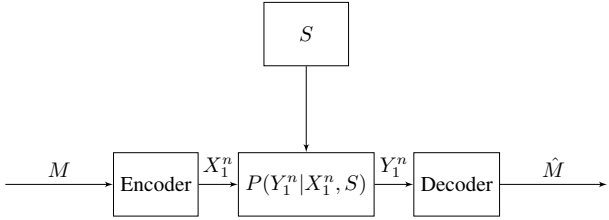
\begin{figure}
\scalebox{.8}{
\tikzstyle{block} = [draw, fill=white!20, rectangle, 
    minimum height=3em, minimum width=4em]
\tikzstyle{input} = [coordinate]
\tikzstyle{output} = [coordinate]
\tikzstyle{pinstyle} = [pin edge={to-,thin,black}]

\begin{tikzpicture}[auto, node distance=2.5cm,>=latex']
    \node [input, name=input] {};
    \node [block, right of=input] (Encoder) {Encoder};
    \node [block, right of=Encoder] (s) {$P(Y_1^n|X_1^n,S)$};
    \node [block, right of=s] (Decoder) {Decoder};
    \draw [->] (s) -- node[name=u] {$Y_1^n$} (Decoder);
    \node [output, right of=Decoder] (output) {};
    \node [block, above of=s] (state) {$S$};

    \draw [->] (input) -- node {$M$} (Encoder);
    \draw [->] (Encoder) -- node {$X_1^n$} (s);
    \draw [->] (Decoder) -- node [name=y] {$\hat{M}$}(output);
    \draw [->] (state) -- node {} (s);
\end{tikzpicture}}
\caption{The main channel presented as a channel with random states \cite{gamal}. Side informations about the state is not available at neither encoder nor decoder.}
\label{fstate}
\end{figure}

\begin{figure}
\scalebox{.8}{
\tikzstyle{block} = [draw, fill=white!20, rectangle, 
    minimum height=3em, minimum width=4em]
\tikzstyle{input} = [coordinate]
\tikzstyle{output} = [coordinate]
\tikzstyle{pinstyle} = [pin edge={to-,thin,black}]

\begin{tikzpicture}[auto, node distance=2.5cm,>=latex']
    \node [input, name=input] {};
    \node [block, right of=input] (Encoder) {Encoder};
    \node [block, right of=Encoder] (s) {$P(Y_1^n|X_1^n,S)$};
    \node [block, right of=s] (Decoder) {Decoder};
    \draw [->] (s) -- node[name=u] {$Y_1^n$} (Decoder);
    \node [output, right of=Decoder] (output) {};
    \node [block, above of=s] (state) {$S$};

    \draw [->] (input) -- node {$M$} (Encoder);
    \draw [->] (Encoder) -- node {$X_1^n$} (s);
    \draw [->] (Decoder) -- node [name=y] {$\hat{M}$}(output);
    \draw [dashed] (state) -| node[pos=0.99] {} 
        node [near end] {} (Encoder);
    \draw [dashed] (state) -| node[pos=0.99] {} 
        node [near end] {} (Decoder);
    \draw [->] (state) -- node {} (s);
\end{tikzpicture}}
\caption{Point-to-point communication system with state, where side information about the state is available at the encoder and the decoder. This setting plays the role of the auxiliary channel for the derivation of the converse bound.}
\label{fstate2}
\end{figure}
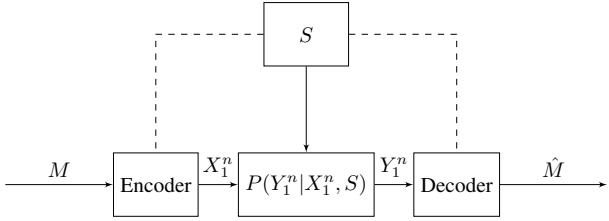

\section{$q$-ary Erasure Channel}
\label{qec}
Erasure channels are considered as a first illustration of the general method of Section \ref{genmeth}, since they are easy to manipulate in terms of mathematical analysis and intuition. When a $q$-ary symbol is transmitted over an erasure channel, the receiver obtains either the correct symbol with probability $1-\delta$ or an unknown symbol with probability $\delta$. 

%

Following the general method, the capacity of a channel with exactly $s$ erasures is computed as follows,
\begin{align}
C_s = n-s \text{ $q$-ary symbols/channel use.}
\label{ercap}
\end{align}
The proof can be found in Appendix \ref{appedA}.

As mentioned previously, the average probability of error $\epsilon$ is lower bounded by the probability of erroneous transmission due to unsupported rates. In the special case of QEC, the channel state is a linear function of the number of erasures. This allows for an easily established upper bound on the non-asymptotic achievable rate of the QEC, which can be composed as a lower bound to average probability of error. 

\begin{theorem}
For the $q$-ary erasure channel with erasure probability $\delta$, the average error probability of a $(n,q^{nR},\epsilon)$ code satisfies,
\begin{align}
\epsilon \geq \sum_{s=n-\lceil{nR}\rceil+1}^n \binom{n}{s} \delta^s (1-\delta)^{n-s}  \Big(1 - q^{n-s-nR}\Big).
\end{align}
\label{conver}
\end{theorem}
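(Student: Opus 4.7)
The plan is to instantiate Corollary \ref{corol} for the QEC and then sharpen the conditional-error term $P(\mathrm{error}|S=s)$ using the structural simplicity of erasures.

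I would let the state $S$ be the number of erasures in a block of length $n$. Because each symbol is independently erased with probability $\delta$, the state is Binomial, giving $P(S=s)=\binom{n}{s}\delta^s(1-\delta)^{n-s}$, which is exactly the binomial factor appearing in the target bound. Using the per-state (aggregate) capacity $C_s=n-s$ from (\ref{ercap}) and Appendix \ref{appedA}, the ``unsupportable-rate'' set in Corollary \ref{corol} becomes
\begin{align*}
\mathcal{S}=\{s:C_s<nR\}=\{s:n-s<nR\}=\{s:s\ge n-\lceil nR\rceil+1\},
\end{align*}
where a short case analysis on whether $nR$ is an integer confirms that the ceiling yields the correct integer cutoff. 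This matches the summation range in the theorem statement.

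Next I need a non-asymptotic lower bound $P(\mathrm{error}|S=s)\ge 1-q^{n-s-nR}$ for each $s\in\mathcal{S}$. Rather than invoking Theorem \ref{wolf}, which is only meaningful asymptotically, I would argue directly via pigeonhole, exploiting the fact that the erasure pattern is observable at the receiver. Conditioned on a fixed erasure pattern with $s$ erasures, the receiver's observation can take at most $q^{n-s}$ distinct values, and by the weak symmetry of the QEC the same bound holds for every such pattern. With a uniform prior over $M=q^{nR}$ codewords, any maximum-likelihood decoder satisfies
\begin{align*}
P(\mathrm{correct}|S=s)\le \frac{q^{n-s}}{M}=q^{n-s-nR},
\end{align*}
because at most one message can be correctly decoded per distinguishable observation; equivalently, $P(\mathrm{error}|S=s)\ge 1-q^{n-s-nR}$. (The same inequality can alternatively be extracted from (\ref{secap}) by noting that for QEC, $P(Z_1^n=z_1^n)/\prod_k P(Y_k=y_k)$ collapses to a product of $q^{-(n-s)}$ times the product of codeword-symbol probabilities.)

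Substituting these two ingredients into Corollary \ref{corol} and summing over $s\in\mathcal{S}$ immediately yields the claimed inequality. The main obstacle I anticipate is the careful bookkeeping of the integer cutoff arising from $\lceil nR\rceil$ (and the corresponding weak/strict inequalities when $nR\in\mathbb{Z}$); the conditional-error lower bound itself is essentially automatic once the erasure positions are regarded as free side information at the receiver.
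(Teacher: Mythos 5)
Your proof is correct and follows the paper's overall template---Corollary \ref{corol}, the binomial state distribution, $C_s=n-s$ from (\ref{ercap}), and the same integer cutoff $s\ge n-\lceil nR\rceil+1$---but you derive the key conditional bound $P(\mathrm{error}|S=s)\ge 1-q^{n-s-nR}$ by a different route. The paper obtains it from the Bayesian chain (\ref{secap}): it writes $P(X_1^n=x_1^n|Y_1^n=y_1^n)$ as $P(Z_1^n=z_1^n)P(X_1^n=x_1^n)/\prod_k P(Y_k=y_k)$ and cancels the noise term against $\prod_{k\in\mathcal{K}}P(X_k=x_k)=q^{-(n-s)}$, which implicitly assumes uniform per-symbol marginals of the codebook and argues pointwise in the received word. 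Your pigeonhole argument---at most $q^{n-s}$ distinguishable outputs per erasure pattern, hence $P(\mathrm{correct}|S=s)\le q^{n-s}/M$ for any code and any decoder under the uniform message prior---reaches the same number without those assumptions, so it is the more elementary and arguably the more airtight step for this particular channel. What the paper's route buys is uniformity of exposition: the identical manipulation of (\ref{secap}) is reused verbatim for the BSC in Theorem \ref{convbsc}, whereas your counting argument exploits the QEC-specific fact that the noise realization (the erasure pattern) is visible at the receiver. Both yield the stated bound, and your handling of the $\lceil nR\rceil$ cutoff matches the paper's chain $s>n-nR\Rightarrow s>\lfloor n-nR\rfloor=n-\lceil nR\rceil$.
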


\begin{proof}
The channel states follow the binomial distribution as in (\ref{bnm}),
\begin{align*}
P(S=s)=\binom{n}{s} \delta^s (1-\delta)^{n-s}. 
\end{align*} 
For $nR > C_s$, the probability of error given the state is bounded based on (\ref{secap}) as follows,
\begin{align*}
\begin{split}
P(\mathrm{error}|S=s) &\geq 1-\frac{P(Z_1^n=z_1^n)P(X_1^n=x_1^n)}{\prod_{k=1}^n P(Y_k=y_k)}\\
&= 1 -\frac{P(Z_1^n=z_1^n)P(X_1^n=x_1^n)}{P(Z_1^n=z_1^n)\prod_{k \in \mathcal{K}} P(X_k=x_k)}\\
&= 1 - \frac{q^{-nR}}{q^{-(n-s)}}\\
&= 1 - q^{n -s-nR}.
\end{split},
\end{align*}
where $\mathcal{K}$ is the set of the indices of the unerased symbols.

For the bounds of the summation, it holds from Corollary \ref{corol} and (\ref{ercap}) that,
\begin{align*}
nR &> C_s = n-s \Rightarrow\\
nR &> n - s \Rightarrow\\
s &> n - nR \Rightarrow\\
s &> \lfloor{n - nR}\rfloor \Rightarrow\\
s &> n - \lceil{nR}\rceil \Rightarrow\\
 n - \lceil{nR}\rceil+1 \leq s &\leq n.
\end{align*} 
\end{proof}

\begin{theorem}
For the $q$-ary erasure channel with erasure probability $\delta$, the average error probability of a $(n,q^{nR},\epsilon)$ code satisfies,
\begin{align}
\begin{split}
\epsilon \geq &\sum_{s=n-\lceil{nR}\rceil+1}^n \binom{n}{s} \delta^s (1-\delta)^{n-s} \\& \cdot\Bigg(1-\frac{4A}{((nR-n+s)\ln(q))^2}-e^{-\frac{(nR-n+s)\ln(q)}{2}}\Bigg),
\end{split}
\end{align}
\label{conver2}
for any constant $A>0$.
\end{theorem}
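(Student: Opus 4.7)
The proof plan mirrors that of Theorem \ref{conver} verbatim except for the per-state lower bound on $P(\mathrm{error}|S=s)$: instead of exploiting the explicit structure of the QEC as in (\ref{secap}), I would invoke Wolfowitz's strong converse (Theorem \ref{wolf}) applied to the state-$s$ auxiliary channel. The binomial weights $\binom{n}{s}\delta^s(1-\delta)^{n-s}$ and the summation range $s \in \{n - \lceil nR \rceil + 1,\dots,n\}$ are inherited unchanged, since they depend only on the erasure statistics (\ref{bnm}) and on the state capacity formula (\ref{ercap}).

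The key step is to translate the state-$s$ auxiliary channel and the $(n,q^{nR},\epsilon)$ code into the format required by Theorem \ref{wolf}, which is stated in nats. The code carries $M = q^{nR} = e^{nR\ln q}$ messages. Conditioned on the state, the auxiliary channel has capacity $C_s = n-s$ $q$-ary symbols per block, i.e.\ $(n-s)\ln q$ nats. Identifying the rate and capacity this way, the gap $R-C$ fed into Theorem \ref{wolf} becomes $(nR - n + s)\ln q$, yielding
\begin{align*}
P(\mathrm{error}|S=s) \geq 1 - \frac{4A}{((nR-n+s)\ln q)^2} - e^{-\frac{(nR-n+s)\ln q}{2}}.
\end{align*}
The hypothesis $R > C$ required by Wolfowitz is exactly the defining condition $nR > n - s$ of $\mathcal{S}$ in Corollary \ref{corol}, so the strong converse is applicable throughout the summation range. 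Substituting these two ingredients into Corollary \ref{corol} gives the claimed bound.

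I expect the main technical obstacle to be the bookkeeping of constants and of the block-versus-per-use interpretation of Wolfowitz: Theorem \ref{wolf} is naturally a per-channel-use asymptotic statement, whereas here it is applied to a state-$s$ channel whose natural ``blocklength'' is either $n$ or the number of unerased positions $n-s$. Care is needed to verify that the resulting polynomial prefactor can be captured by a single constant $A$ uniformly in $s$, which is why the final statement is phrased permissively as ``for any constant $A>0$.'' Once this reduction is settled, the remaining steps are purely algebraic and parallel those of the proof of Theorem \ref{conver}.
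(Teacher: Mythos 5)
Your proposal matches the paper's proof essentially verbatim: both reuse the binomial state weights and summation range from Theorem \ref{conver} and replace the per-state bound with Wolfowitz's strong converse applied to the state-$s$ auxiliary channel, with rate and capacity converted to nats so that the gap becomes $(nR-n+s)\ln q$. The blocklength subtlety you flag is resolved in the paper exactly as you suggest --- the $n$ uses of the main channel are treated as a single use of the auxiliary channel (blocklength $1$ in Theorem \ref{wolf}), with the constant $A$ left free, hence the permissive ``for any constant $A>0$.''
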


\begin{proof}
The proof is similar to the proof of Theorem \ref{conver} with different derivation of the lower bound on $P(\mathrm{error}|S=s)$. For $nR > C_s$, the probability of error given the state  is bounded based on Theorem \ref{wolf} as follows,
\begin{align*}
\begin{split}
P(\mathrm{error}|S=s) &\geq 1-\frac{4A}{((nR-C_s)\ln(q))^2}-e^{-\frac{(nR-C_s)\ln(q)}{2}}\\
&=1-\frac{4A}{((nR-n+s)\ln(q))^2}\\
&\quad\quad-e^{-\frac{(nR-n+s)\ln(q)}{2}},
\end{split}
\end{align*}
for any constant $A>0$. Note that aggregate rate $nR$ and state capacity $C_s$ are converted to nats from $q$-ary symbols and the blocklength in Theorem \ref{wolf} is set to 1 since $n$ channel uses at the main channel are equivalent to one channel use for the auxiliary.

\end{proof}

Remarkably, Theorem \ref{conver} is the same as the meta-converse bound (\ref{cnv3}). It complies with the Singleton bound \cite{1053661} and consequently can be achieved by maximum distance separable (MDS) codes. Note also that there are not any non-trivial binary MDS codes that can achieve this bound for the binary erasure channel (BEC) \cite{1130706}. As a result, it is strictly greater than the achievable rate of BEC in non-trivial settings. Theorem \ref{conver2} produces a slightly relaxed bound as it is depicted in Figure \ref{f1}. This is expected since it does not consider the specific structure of each channel state.



\begin{figure}
\centering
\includegraphics[scale=0.6]{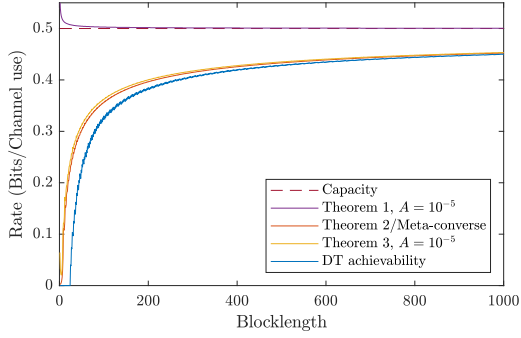}
\caption{Converse and achievability bounds on the non-asymptotic achievable rate of BEC(0.5) for average error probability $\epsilon = 10^{-3}$.}
\label{f1}
\end{figure}

\section{Binary Symmetric Channel}
\label{bsc}
Binary symmetric channel is another important binary-input channel where bits are inverted during transmission with probability $\delta$. Naturally, the differentiation of the states is based on the number of errors $s$. Hence, channel state capacities are measured as follows, 
\begin{align}
C_s = n-\log_2\binom{n}{s} \text{ bits/channel use.}
\label{ercap2}
\end{align}
The proof can be found in Appendix \ref{appedB}.

By manipulating (\ref{ercap2}) similarly to (\ref{ercap}), the lower bounds to average probability of error are established. Formally, 

\begin{theorem}
For the binary symmetric channel with error probability $\delta$, the average error probability of a $(n,2^{nR},\epsilon)$ code satisfies,
\begin{align}
\epsilon \geq \sum_{s \in \mathcal{S}} \binom{n}{s} \delta^s (1-\delta)^{n-s}\bigg(1 - \binom{n}{s}^{-1}2^{n-nR}\bigg),
\end{align}
where, 
\begin{align*}
\mathcal{S} =  \bigg \{s:\binom{n}{s}&>2^{n-nR}\wedge 0\leq s \leq n  \bigg \}.
\end{align*}
\label{convbsc}
\end{theorem}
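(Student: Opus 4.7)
The plan is to follow the same template as the proof of Theorem \ref{conver}, but with BSC-specific expressions for each ingredient in the bound (\ref{secap}), and then to translate the state inclusion condition $nR > C_s$ with the BSC capacity formula (\ref{ercap2}) into the set $\mathcal{S}$ appearing in the statement.

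First, I would record that the states follow the binomial distribution from (\ref{bnm}) because the number of bit flips in a block of length $n$ over a BSC($\delta$) is $\mathrm{Binomial}(n,\delta)$. Then, for each fixed $s$ with $nR > C_s$, I would identify the three quantities needed in the last line of (\ref{secap}). Under the auxiliary channel with state known to both ends, conditional on $S=s$ the noise vector $Z_1^n$ is uniform over the $\binom{n}{s}$ binary strings of Hamming weight $s$, so $P(Z_1^n=z_1^n)=\binom{n}{s}^{-1}$ on its support. Choosing the capacity-achieving uniform distribution over the $M=2^{nR}$ codewords gives $P(X_1^n=x_1^n)=2^{-nR}$, and since each $X_k$ is then marginally uniform on $\{0,1\}$ and $Y_k = X_k\oplus Z_k$, each $Y_k$ is also uniform, yielding $\prod_{k=1}^n P(Y_k=y_k)=2^{-n}$.

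Plugging these three quantities into (\ref{secap}) immediately gives
\begin{align*}
P(\mathrm{error}\mid S=s)\;\geq\;1-\frac{\binom{n}{s}^{-1}\,2^{-nR}}{2^{-n}}\;=\;1-\binom{n}{s}^{-1}2^{n-nR},
\end{align*}
which is exactly the per-state factor in the statement. Next I would unfold the condition $nR>C_s$ using (\ref{ercap2}): $nR>n-\log_2\binom{n}{s}$ rearranges to $\binom{n}{s}>2^{n-nR}$, which together with the trivial range $0\leq s\leq n$ is precisely the defining predicate of $\mathcal{S}$. Combining the per-state bound with the binomial state probabilities through Corollary \ref{corol} then yields the claimed sum.

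The computation is routine; the only subtle point worth double-checking is the per-state noise distribution and the marginal-output calculation, since (\ref{secap}) uses the unconditional marginals $P(Y_k=y_k)$ rather than $P(Y_k=y_k\mid S=s)$. I expect that to be clean here because the uniform codeword distribution renders each $Y_k$ marginally uniform regardless of conditioning on $S$, so no additional concavity or symmetrization argument is needed. Unlike the QEC case, the set $\mathcal{S}$ cannot be written as a contiguous tail $\{s_0,\dots,n\}$ because $\binom{n}{s}$ is not monotone in $s$, so I would leave it in the implicit form $\{s:\binom{n}{s}>2^{n-nR},\,0\leq s\leq n\}$ rather than trying to simplify the summation range.
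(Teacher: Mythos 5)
Your proposal is correct and follows essentially the same route as the paper's own proof: the binomial state distribution from (\ref{bnm}), the per-state bound obtained by substituting $P(Z_1^n=z_1^n)=\binom{n}{s}^{-1}$, $P(X_1^n=x_1^n)=2^{-nR}$, and $\prod_k P(Y_k=y_k)=2^{-n}$ into (\ref{secap}), and the translation of $nR>C_s$ via (\ref{ercap2}) into $\binom{n}{s}>2^{n-nR}$. You in fact supply more justification than the paper does for the three probability terms and for why $\mathcal{S}$ is left in implicit form.
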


\begin{proof}The channel states follow the binomial distribution as in (\ref{bnm}),\begin{align*} 
P(S=s)=\binom{n}{s} \delta^s (1-\delta)^{n-s}. 
\end{align*}
For $nR > C_s$, the probability of error given the state is bounded based on (\ref{secap}) as follows,
\begin{align*}
\begin{split}
P(\mathrm{error}|S=s) &\geq 1-\frac{P(Z_1^n=z_1^n)P(X_1^n=x_1^n)}{\prod_{k=1}^n P(Y_k=y_k)}\\
&= 1 - \frac{\binom{n}{s}^{-1}2^{-nR}}{2^{-n}}\\
&= 1 - \binom{n}{s}^{-1}2^{n-nR}.
\end{split}
\end{align*}

For the bounds of the summation, it holds from Corollary \ref{corol} and (\ref{ercap2}) that,
\begin{align*}
nR &> C_s =  n-\log_2\binom{n}{s} \Rightarrow\\
\log_2\binom{n}{s}&>n-nR\Rightarrow\\
\binom{n}{s}&>2^{n-nR}.\qedhere
\end{align*}
\end{proof}

\begin{theorem}
For the binary symmetric channel with error probability $\delta$, the average error probability of a $(n,2^{nR},\epsilon)$ code satisfies,
\begin{align}
\begin{split}
\epsilon \geq \sum_{s \in \mathcal{S}} \binom{n}{s} \delta^s (1-\delta)^{n-s}F(n,s,R,A),
\end{split}
\end{align}
\begin{align*}
F(n,s,R,A)=1&-\frac{4A}{((nR- n+\log_2\binom{n}{s} )\ln(2))^2}\\&\quad\quad-e^{-\frac{1}{2}(nR- n+\log_2\binom{n}{s} )\ln(2)},
\end{align*}
where, 
\begin{align*}
\mathcal{S} =  \bigg \{s:\binom{n}{s}&>2^{n-nR}\wedge 0\leq s \leq n  \bigg \},
\end{align*}
for any constant $A>0$.
\label{convbsc2}
\end{theorem}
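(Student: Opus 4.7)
The plan is to mirror the proof of Theorem~\ref{conver2}, but with the BSC-specific state capacity $C_s = n - \log_2\binom{n}{s}$ from (\ref{ercap2}) replacing the QEC value $n-s$ from (\ref{ercap}). The overall skeleton is fixed by Corollary~\ref{corol}: starting from $\epsilon \geq \sum_{s \in \mathcal{S}} P(\mathrm{error}|S=s) P(S=s)$, I only need to supply (i) the state distribution, (ii) the summation set $\mathcal{S}$, and (iii) a lower bound on $P(\mathrm{error}|S=s)$ for $s \in \mathcal{S}$.

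For (i), the state (number of bit flips in $n$ uses) is binomial, identical to Theorem~\ref{convbsc}, so $P(S=s) = \binom{n}{s}\delta^s(1-\delta)^{n-s}$. For (ii), the defining condition $nR > C_s$ combined with $C_s = n - \log_2\binom{n}{s}$ rearranges to $\binom{n}{s} > 2^{n-nR}$, which together with $0 \leq s \leq n$ produces exactly the set $\mathcal{S}$ displayed in the theorem; this is the same chain of inequalities already executed at the end of the proof of Theorem~\ref{convbsc} and may simply be cited rather than rederived.

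For (iii), instead of the direct inequality (\ref{secap}) I would invoke Wolfowitz's strong converse (Theorem~\ref{wolf}), which is the single novel step relative to Theorem~\ref{convbsc}. Since Theorem~\ref{wolf} is stated in nats while $C_s$ and $nR$ are expressed in bits, both rate quantities must be multiplied by $\ln 2$. Furthermore, the $n$ channel uses of the main channel conditioned on state $s$ collapse into a single aggregate use of the auxiliary channel, so the blocklength parameter entering Theorem~\ref{wolf} is taken to be $1$, exactly as in the proof of Theorem~\ref{conver2}. With these substitutions the gap $R-C$ in Wolfowitz's formula becomes $(nR - n + \log_2\binom{n}{s})\ln 2$, and direct plugging into the inequality reproduces $F(n,s,R,A)$ verbatim. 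Combining (i), (ii), (iii) via Corollary~\ref{corol} then yields the claim.

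The only subtlety worth verifying is the legitimacy of collapsing the auxiliary channel conditioned on $S=s$ into a single-shot discrete memoryless channel: one must confirm that the constant $A$ in Theorem~\ref{wolf}, which depends on the channel but not on $n$ or $R$, remains valid under this reinterpretation so that the resulting bound holds uniformly in $s$. Aside from this conceptual point, every remaining step is either a direct quotation of an earlier result or routine algebra already performed in Theorems~\ref{conver2} and~\ref{convbsc}.
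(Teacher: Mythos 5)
Your proposal matches the paper's own proof essentially step for step: the binomial state distribution, the set $\mathcal{S}$ obtained by rearranging $nR > C_s = n - \log_2\binom{n}{s}$, and the substitution into Wolfowitz's strong converse with rates converted to nats and the auxiliary-channel blocklength set to $1$. The caveat you raise about the channel-dependent constant $A$ under the single-shot reinterpretation is a legitimate concern that the paper itself does not address beyond stating the bound ``for any constant $A>0$,'' so flagging it is a point in your favor rather than a deviation.
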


\begin{proof}
The proof is similar to the proof of Theorem \ref{convbsc} with different derivation of the lower bound on $P(\mathrm{error}|S=s)$. For $nR > C_s$, the probability of error given the state  is bounded based on Theorem \ref{wolf} as follows,
\begin{align*}
\begin{split}
P(\mathrm{error}|S=s) &\geq 1-\frac{4A}{((nR-C_s)\ln(2))^2}-e^{-\frac{(nR-C_s)\ln(2)}{2}}\\
&=1-\frac{4A}{((nR- n+\log_2\binom{n}{s} )\ln(2))^2}\\&\quad\quad-e^{-\frac{1}{2}(nR- n+\log_2\binom{n}{s} )\ln(2)},
\end{split}
\end{align*}
for any constant $A>0$. Note that aggregate rate $nR$ and state capacity $C_s$ are converted to nats from bits and the blocklength in Theorem \ref{wolf} is set to 1 since $n$ channel uses at the main channel are equivalent to one channel use for the auxiliary.
\end{proof}

The numerical evaluation of Theorem \ref{convbsc} and Theorem \ref{convbsc2} is presented in Figure \ref{f1bsc} for the BSC with error rate $\delta = 0.05$ and average error probability $\epsilon = 10^{-3}$. Both converse bounds converge rapidly to the state-of-the-art meta-converse/sphere-packing bound.

\begin{figure}
\centering
\includegraphics[scale=0.6]{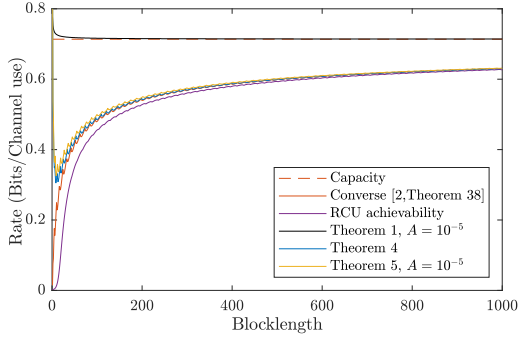}
\caption{Converse and achievability bounds on the non-asymptotic achievable rate of BSC(0.05) for average error probability $\epsilon = 10^{-3}$.}
\label{f1bsc}
\end{figure}

\section{$q$-ary Erasure Channel with Stop-Feedback}
\label{vl}
So far the method is demonstrated for channels with fixed blocklength where the resulting converse bounds are identical or similar to the state-of-the-art. In this section, a converse bound is produced for the QEC with stop feedback which is tighter than the state-of-the-art bound.

In the setting of variable length coding with stop feedback, rounds of packets of  $n$ symbols are transmitted. After the transmission of each packet the receiver decides whether it has enough channel outputs to perform decoding and informs the transmitter through the  feedback link on its decision \cite{5961844}.  The average number of required packets is denoted as $w_a$, hence the average blocklength is $l_a=w_an$.

The adaptation of the method of Section \ref{genmeth} to variable length coding with stop feedback utilizes an auxiliary channel as well. Contrary to the fixed blocklength setting, the aim is not to define a lower bound on average probability of error $\epsilon$ but a lower bound on average blocklength $l_a$ for $\epsilon=0$ given that feedback is noiseless and infinite transmissions prior to decoding are allowed. This means that the decoder is not allowed to randomly guess information bits when the rate is not supported by the channel state capacity, because it can result in a positive probability of error. By calculating the probability of the rate becoming supported by the channel at the $j$th channel use, the following bound is derived.

\begin{theorem}
The average blocklength of a $(l_a,q^k,n,0)$ VLSF code over a $q$-ary erasure channel with erasure rate $\delta$ is bounded as follows,

\begin{align}
l_a \geq n\sum_{m\in \mathbb{Z}^{+}} m\sum_{\substack{j={}\\mn-n+1}}^{mn} \binom{j-1}{\lceil{k}\rceil-1}\delta^{j-\lceil{k}\rceil}(1-\delta)^{\lceil{k}\rceil}.
\end{align}

\label{theoremvl}
\end{theorem}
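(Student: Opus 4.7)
The plan is to specialize the auxiliary channel construction of Section~\ref{genmeth} to the variable-length stop-feedback setting. I would introduce an auxiliary QEC in which the entire erasure pattern is revealed noncausally to both encoder and decoder; since this side information can only decrease the required average blocklength, any lower bound on $l_a$ for the auxiliary channel is automatically a lower bound for the true channel. The zero-error constraint forces the decoder to refuse stopping until the realized state of the channel already supports the rate, so the task reduces to computing the earliest packet boundary at which this can occur.

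First I would invoke the state-capacity expression~(\ref{ercap}): after any $j$ channel uses of the auxiliary QEC yielding $s$ erasures, the unerased coordinates form a noiseless $q$-ary subchannel of aggregate capacity $j-s$ $q$-ary symbols, so distinguishing the $q^k$ messages with zero error requires $j-s \geq \lceil k\rceil$. Let $J$ denote the index of the $\lceil k\rceil$-th unerased symbol in the i.i.d.\ Bernoulli$(\delta)$ erasure stream; no decoder can then stop before time $J$. Stop feedback restricts stopping to packet boundaries, so the number of transmitted packets $T$ satisfies $T \geq \lceil J/n\rceil$, and in particular $T = m$ exactly when $mn - n + 1 \leq J \leq mn$.

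The waiting-time $J$ for the $\lceil k\rceil$-th success in i.i.d.\ Bernoulli$(1-\delta)$ trials is a negative binomial,
\begin{equation*}
P(J = j) = \binom{j-1}{\lceil k \rceil - 1}(1-\delta)^{\lceil k \rceil}\delta^{\,j-\lceil k \rceil}, \qquad j \geq \lceil k \rceil.
\end{equation*}
Combining this with $l_a \geq n\,\mathbb{E}[T] = n\sum_{m\in\mathbb{Z}^+} m\,P(T=m)$ and substituting the pmf of $J$ over the range $mn-n+1 \leq j \leq mn$ yields the claimed inequality.

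The main obstacle I anticipate is rigorously justifying the zero-error stopping criterion $j-s \geq \lceil k\rceil$ for every code, not merely for MDS ones: I must argue that, even though the auxiliary encoder may adapt its transmission to the known erasure pattern, the decoder still observes at most $q^{j-s}$ distinct output sequences on the unerased coordinates and so cannot recover one of $q^k$ messages with probability one unless $j-s \geq k$. The ceiling then appears because $j-s$ is integer-valued, and the lower summation index $mn-n+1$ reflects the fact that the $\lceil k\rceil$-th unerased symbol may land anywhere within packet $m$. Interchange of summations and finiteness of $\mathbb{E}[T]$ are immediate from the negative binomial having all moments, so no delicate analytic step is required beyond the capacity-based stopping argument.
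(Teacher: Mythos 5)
Your proposal is correct and follows essentially the same route as the paper's proof: the waiting time for the $\lceil k\rceil$-th unerased symbol is negative binomial, stopping is restricted to packet boundaries, and the average blocklength is lower-bounded by $n$ times the expected packet index. Your counting argument for why zero-error decoding forces $j-s\geq\lceil k\rceil$ is in fact more explicit than the paper's, which simply invokes the state capacity (\ref{ercap}) and the integrality of $j-s$.
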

\begin{proof}
Similarly to previous Sections, channel states are differentiated based on the number of erroneous transmission. Hence, channel state capacity (\ref{ercap}) stands. Since (\ref{ercap}) is always an integer, the criterion for the possibly supported rates can be the following,
\begin{align}
\begin{split}
R_a \leq C_s \Rightarrow \lceil{k}\rceil = j - s.
\end{split}
\end{align}
If the $\lceil{k}\rceil$th unerased symbol is received at the $j$th transmission, then the rate is supported and successful decoding might be possible. This process is described by the negative binomial distribution, 
\begin{align}
P(j;\delta,\lceil{k}\rceil) = \binom{j-1}{\lceil{k}\rceil-1}\delta^{j-\lceil{k}\rceil}(1-\delta)^{\lceil{k}\rceil}.
\end{align}
The probability of successful decoding during the $m$th packet is the following,
\begin{align}
\begin{split}
P(m;\delta,n,\lceil{k}\rceil) &= \sum_{j=mn-n+1}^{mn} P(j;\delta,\lceil{k}\rceil)\\
&= \sum_{\substack{j={}\\mn-n+1}}^{mn} \binom{j-1}{\lceil{k}\rceil-1}\delta^{j-\lceil{k}\rceil}(1-\delta)^{\lceil{k}\rceil}.
\end{split}
\end{align}
Hence, the average number of required packets for successful transmission is bounded as follows, 
\begin{align}
\begin{split}
w_a &\geq E[m] = \sum_{m\in \mathbb{Z}^{+}} mP(m;\delta,n,\lceil{k}\rceil)\Rightarrow\\
l_a &\geq  n\sum_{m\in \mathbb{Z}^{+}} mP(m;\delta,n,\lceil{k}\rceil)
\end{split}
\end{align}  
\end{proof}

In \cite{7606801}, several bounds for the BEC with feedback are presented. The authors discuss the gap between their achievability \cite[Theorem 7, Theorem 9]{7606801} and converse\cite[Corollary 6]{7606801} bounds which increases for very short average blocklengths. By comparing their results with Theorem \ref{theoremvl} in Figure \ref{fvl} it is becoming apparent that this gap is fundamental since Theorem \ref{theoremvl} coincides with the achievability bound \cite[Theorem 9]{7606801}. Additionally, Theorem \ref{theoremvl} is generally tighter than converse\cite[Corollary 6]{7606801}.

\begin{figure}
\centering
\includegraphics[scale=0.6]{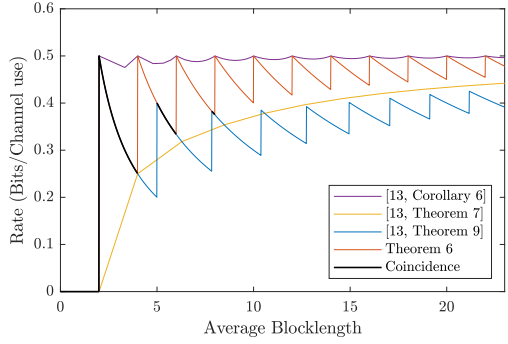}
\caption{Converse and achievability bounds on the non-asymptotic achievable rate of BEC(0.5) with noiseless stop feedback and average error probability $\epsilon = 0$.}
\label{fvl}
\end{figure}

\section{Conclusions}

The current paper presents novel results towards the characterization of the non-asymptotic achievable rate of memoryless discrete channels. A general derivation method of converse bounds is described for weakly symmetric channels and then it is particularized for the $q$-ary erasure channel, the binary symmetric channel, and the $q$-ary erasure channel with stop feedback. In the case of QEC, one of the derived bounds is identical to the meta-converse bound.  Numerical evaluations for BSC show quick convergence to other state-of-the-art converse bounds, namely the meta-converse and sphere-packing bounds. For the QEC with stop feedback, a bound that improves on the state-of-the-art is produced and it coincides with an achievability bound in certain settings. The extension of this result to arbitrary and asymmetric discrete memoryless channels is particularly interesting. Additionally, other settings with variable length coding with feedback can be explored. These coding schemes are very promising in the aspect of low-latency communications, however complex bounds are not easily specialized. Auxiliary channels with feedback could be a valuable tool for a better characterization of their achievable rates.

\section*{Acknowledgment}
This work is supported by the Engineering and Physical Sciences Research Council (EP/L016656/1); and the University of Bristol.


\appendices

\section{Proof of (\ref{ercap})}
\label{appedA}
The random variable $S$ denotes the number of erasures that occur after $n$ transmissions.
\begin{align}
\begin{split}
C_s &= \max_{p(x_1,...,x_n)}{I(X_1^n;Y_1^n|S=s)} \\
	&= \max_{p(x_1,...,x_n)}(H(X_1^n|S=s)-H(X_1^n|Y_1^n,S=s))\\
	&= \max_{p(x_1,...,x_n)}{\bigg(H(X_1^n)-s \frac{H(X_1^n)}{n}\bigg)}\\
	&= \max_{p(x_1,...,x_n)}{\bigg(H(X_1^n)\bigg(1-\frac{s}{n}\bigg) \bigg)}\\
	&= n-s \text{ $q$-ary symbols/channel use.}
\end{split}
\end{align}
The maximization of the state capacity is achieved by independent and identical uniform $q$-ary distributions on each channel input $X_i$. 

\section{Proof of (\ref{ercap2})}
\label{appedB}
The random variable $S$ denotes the number of errors that occur after $n$ transmissions.
\begin{align}
\begin{split}
C_s &= \max_{p(x_1,...,x_n)}{I(X_1^n;Y_1^n|S=s)} \\
	&= \max_{p(x_1,...,x_n)}(H(Y_1^n|S=s)-H(Y_1^n|X_1^n,S=s))\\
	&= \max_{p(x_1,...,x_n)}(H(X_1^n+Z_1^n)-H(X_1^n+Z_1^n|X_1^n))\\
	&= \max_{p(x_1,...,x_n)}(H(X_1^n+Z_1^n)-H(Z_1^n))\\
	&= n-\log_2\binom{n}{s} \text{ bits/channel use.}
\end{split}
\end{align}
Vector $Z_1^n$ denotes a binary random vector of Hamming weight $s$. Generally,  if $X_i$ follows a uniform distribution then $X_i+Z_i$ also follows a uniform distribution and its entropy is maximised.
\bibliography{bound_article}

\bibliographystyle{ieeetr}

\end{document}